\documentclass[conference]{IEEEtran}
\IEEEoverridecommandlockouts

\usepackage{cite}
\usepackage{amsmath,amssymb,amsfonts}
\usepackage{algorithmic}
\usepackage{graphicx}
\usepackage{textcomp}
\usepackage{xcolor}
\usepackage{booktabs}
\usepackage{multirow}
\usepackage{subcaption}
\usepackage{url}
\usepackage{enumitem}
\usepackage{hyperref}
\usepackage{cleveref}
\usepackage{balance}
\usepackage[htt]{hyphenat}

\hypersetup{hidelinks}
\def\BibTeX{{\rm B\kern-.05em{\sc i\kern-.025em b}\kern-.08em
    T\kern-.1667em\lower.7ex\hbox{E}\kern-.125emX}}

\newtheorem{proposition}{Proposition}
\newenvironment{proof}{\noindent\textit{Proof.}}{\hfill$\square$\medskip}

\begin{document}

\title{NetKV: Network-Aware Decode Instance Selection\\for Disaggregated LLM Inference}


\author{\IEEEauthorblockN{Mubarak Adetunji Ojewale}
\IEEEauthorblockA{Cloud Competency Centre, National College of Ireland\\
mubarak.ojewale@ncirl.ie}}

\maketitle

\begin{abstract}
Disaggregated LLM inference forces the KV cache to traverse the datacenter network before decoding begins, so transfer time enters directly into the Time to First Token (TTFT) budget. Current schedulers route on compute load and prefix-cache locality alone, ignoring the topological distance and dynamic congestion between prefill and decode instances. We close this gap with a thin operator-to-scheduler interface, the \emph{network cost oracle}, and we prove that ignoring the network term renders cache-aware-only scheduling arbitrarily suboptimal as context length grows. NetKV, the $O(|\mathcal{D}|)$ per-request greedy that consumes this oracle, has tier rankings that are provably robust to stale telemetry. On a 64-GPU four-tier fat-tree simulator driven by Mooncake traces, NetKV reduces mean TTFT by up to 21.2\% over round-robin and 17.6\% over a tuned cache+load-aware scheduler, lifts SLO attainment by up to 20.1 percentage points, and keeps the Time Between Tokens overhead below 0.5\,ms in every condition tested, with no changes to the transport, inference engine, or hardware.
\end{abstract}

\begin{IEEEkeywords}
LLM inference, disaggregated serving, real-time scheduling, network-aware placement, KV cache transfer, tail latency
\end{IEEEkeywords}

\section{Introduction}
\label{sec:intro}

A key requirement for interactive Large Language Model (LLM) serving is the ability to deliver the first generated token within a Service Level Objective (SLO) on the Time to First Token (TTFT), and to sustain a steady inter-token rate, called the Time Between Tokens (TBT), once decoding has begun. Modern production systems for LLM inference have converged on a \emph{disaggregated} architecture in which the compute-intensive prefill phase and the memory-bandwidth-bound decode phase run on separate GPU pools~\cite{distserve,splitwise,mooncake}. This design is now adopted by every major framework, including NVIDIA Dynamo~\cite{dynamo}, llm-d~\cite{llmd}, SGLang~\cite{sglang}, and vLLM~\cite{vllm}, because it eliminates the prefill-decode interference that arises when the two phases share a GPU, and allows the two pools to be sized independently.

Disaggregation, however, introduces a new system bottleneck. The Key-Value (KV) cache generated during prefill must be transferred across the datacenter network to a decode instance before token generation can begin. For a 128K-context Llama-3-70B~\cite{llama3} request, the aggregate KV cache is approximately 40\,GB and, under tensor-parallel (TP) sharding with TP=4, this corresponds to roughly 10\,GB of data crossing each prefill-to-decode GPU pair in parallel. On a 25\,Gbps Remote Direct Memory Access (RDMA) link, that per-pair transfer takes about 3.2\,s, easily dominating the TTFT budget. The transfer time therefore enters directly into the TTFT, and the chosen decode instance determines what fraction of the budget is consumed by data movement rather than computation.

The central scheduling decision in disaggregated inference can thus be stated as follows. Given a request that has completed prefill on instance $p$, which decode instance $d$ should receive its KV cache? Current production schedulers answer this question using two signals: compute load (GPU memory availability, current batch size, queuing delay) and KV-cache locality (whether a candidate decode instance already holds a matching prefix from a prior request). Specifically, DistServe favours same-node placement at deployment time but has no per-request network signal~\cite{distserve}; Mooncake's Conductor scores candidates by cache match and instance load~\cite{mooncake}; llm-d implements a composite scorer over prefix-cache and load within the Kubernetes Gateway API~\cite{llmd}; and Dynamo's KV-aware router directs requests to instances holding matching prefix blocks~\cite{dynamo}. The positioning is summarised in Table~\ref{tab:positioning}. None of these schedulers, however, incorporates a third signal that directly determines transfer latency, namely the \emph{network cost} between the prefill and the decode instance. Mooncake's authors explicitly identify this gap: ``\emph{More difficulty lies in predicting the transfer time because it is determined not only by the size of the transferred data but also by the current network status, especially whether the sending node is under congestion}''~\cite{mooncake}.

This omission reflects a fundamental information asymmetry between two parties. The inference scheduler, operating at the application layer, has no visibility into the physical network topology, link utilisation, or routing decisions. The cluster operator, managing the network fabric, has detailed topology and telemetry but no knowledge of upcoming KV cache transfers, their sizes, or their latency sensitivity. Neither side alone can make a globally optimal placement decision~\cite{Canini2025Cloud}. We postulate that bridging this gap does not require deep integration, but rather a thin, well-defined information-exchange interface that we call a \emph{network cost oracle}. The operator does not need to understand inference semantics, and the scheduler does not need raw topology data. A small, periodically updated signal, specifically a locality-tier classification of instance pairs together with per-tier bandwidth estimates and congestion indicators, is sufficient to make the network bottleneck visible to the scheduler.

We make a threefold contribution. (1)~We formalise decode-instance selection as an online optimisation jointly over compute load, cache locality, and network transfer cost, and we prove that ignoring the network term renders cache-aware-only scheduling arbitrarily suboptimal as context length grows (Proposition~\ref{prop:suboptimal} in \S\ref{sec:formulation}). (2)~We propose NetKV, an $O(|\mathcal{D}|)$ per-request greedy algorithm that consumes the oracle output, and we prove a sufficient condition under which its tier rankings are robust to stale congestion telemetry (Proposition~\ref{prop:staleness} in \S\ref{sec:algorithm}). NetKV is deployable as a scoring plugin to existing schedulers and requires no changes to the transport layer, the inference engine, or the network hardware. (3)~We evaluate NetKV on a 64-GPU four-tier fat-tree cluster simulator driven by Mooncake production traces (\S\ref{sec:eval}). Across seven experiments and five baselines, NetKV reduces mean TTFT by up to 21.2\% over round-robin and 17.6\% over a tuned cache+load-aware scheduler, improves SLO attainment by up to 20.1 percentage points, and keeps the TBT overhead below 0.5\,ms in every condition we tested. The improvements scale with context length, network oversubscription, and background congestion, the regimes in which the network bottleneck of disaggregation is most acute.

\section{Background and Related Work}
\label{sec:related}

\subsection{Disaggregated LLM Inference}

LLM inference comprises two phases with distinct computational profiles. The \emph{prefill} phase processes the entire input prompt in a single forward pass and is compute-bound. The \emph{decode} phase generates output tokens autoregressively, one per step, and is memory-bandwidth-bound. Colocating both phases on the same GPU leads to mutual interference: long prefills block decode batches and inflate the TBT, while decode batches fragment GPU memory that prefill would otherwise use.

DistServe~\cite{distserve} introduced prefill-decode disaggregation and co-optimised GPU allocation and parallelism strategies for each phase independently. Splitwise~\cite{splitwise} proposed a similar phase-splitting architecture with layer-level pipelining that overlaps KV cache transfer with computation. Mooncake~\cite{mooncake} built a production-scale disaggregated platform for Kimi (the Moonshot AI chatbot), featuring a KV-cache-centric distributed store that leverages CPU DRAM, SSD, and RDMA for cache persistence. FlowKV~\cite{flowkv} addresses transfer latency through contiguous memory layout and fused transfer kernels, reporting up to a 96\% reduction in per-transfer latency relative to its own naive baseline. Sarathi-Serve~\cite{sarathi} softens the prefill-decode boundary via chunked prefill but does not eliminate the need to transport KV state across instances. All of these systems acknowledge KV cache transfer as the principal overhead of disaggregation. DistServe mitigates it by preferring same-node placement, Splitwise overlaps transfer with computation, Mooncake optimises the storage and retrieval path, and FlowKV optimises the transfer mechanism itself. None of them, however, makes per-request decode placement decisions as a function of dynamic network state or topological distance, which is the gap that NetKV addresses.

\subsection{KV-Cache-Aware Request Routing}

A parallel line of work optimises request routing based on KV-cache locality. llm-d~\cite{llmd} implements prefix-cache-aware routing within the Kubernetes Gateway API Inference Extension, scoring decode pods by prefix match length. NVIDIA Dynamo~\cite{dynamo} provides a KV-aware router that directs requests to instances holding matching prefix blocks, together with a KV Block Manager for hierarchical cache offloading. AlpaServe~\cite{alpaserve} places model replicas to optimise Service Level Objective (SLO) attainment under bursty workloads. ServerlessLLM~\cite{serverlessllm} introduces locality-aware checkpoint loading. Llumnix~\cite{llumnix} supports live request migration across decode instances for load rebalancing. These systems demonstrate that cache-aware routing produces large TTFT gains over cache-blind baselines. Their per-request routing decisions, however, remain topology-agnostic. A decode pod with a 90\% prefix hit on a congested cross-pod link can yield a worse TTFT than a cold-cache pod on the same rack. NetKV adds the network dimension to this decision, enabling the scheduler to reason about the tradeoff between cache locality and transfer cost.

\subsection{Network-Aware Scheduling for ML Workloads}

Network-aware scheduling has been extensively studied for distributed training. CASSINI~\cite{cassini} models the periodic communication patterns of training jobs as geometric abstractions, enabling interleaving of communication phases across jobs that share network links. vClos~\cite{vclos} jointly optimises topology, routing, communication patterns, and GPU assignments. TopoOpt~\cite{topoopt} co-optimises network topology and parallelisation strategy using reconfigurable optical interconnects. These systems exploit a key property of training, namely that communication is periodic, deterministic, and known in advance: an AllReduce of fixed message size at every iteration. Inference communication is fundamentally different. KV cache transfers are request-driven, stochastic, and variable in size, with the payload proportional to the input sequence length. The communication graph changes with every request, so CASSINI's circle-based abstraction and TopoOpt's offline topology co-optimisation are inapplicable. NetKV addresses this by treating network-aware scheduling as an online decision problem integrated with the per-request scheduling loop.

\subsection{Positioning}

GORGO~\cite{gorgo} studies a related tradeoff between KV cache reuse and network latency at the geo-distributed level, where WAN latencies are measured in hundreds of milliseconds to several seconds. NetKV operates intra-cluster, where the information asymmetry between scheduler and operator is most acute and where topology granularity (rack, pod, spine) directly determines transfer performance. Helix~\cite{helix} formulates LLM placement on heterogeneous GPUs as a max-flow problem on the cluster's compute-network graph, but solves an offline placement problem rather than per-request routing within an already-placed pool, and is therefore complementary to NetKV. Cloud providers such as AWS SageMaker HyperPod, NVIDIA Run:ai, and Kubernetes Topology Aware Scheduling support topology-aware initial placement but not per-request routing within an already-placed pool. Table~\ref{tab:positioning} summarises this positioning. NetKV is, to our knowledge, the first system to incorporate network topology and dynamic congestion awareness into per-request decode-instance selection for disaggregated LLM inference.

\begin{table}[t]
\centering
\caption{Positioning of NetKV relative to existing systems.}
\label{tab:positioning}
\footnotesize
\begin{tabular}{@{}lcccc@{}}
\toprule
System & Cache & Net. & Per-req & Inf. \\
\midrule
DistServe~\cite{distserve}                   & \texttimes & \texttimes$^a$ & \checkmark & \checkmark \\
Mooncake~\cite{mooncake}                     & \checkmark & \texttimes & \checkmark & \checkmark \\
llm-d~\cite{llmd}/Dynamo~\cite{dynamo}       & \checkmark & \texttimes & \checkmark & \checkmark \\
CASSINI~\cite{cassini}/TopoOpt~\cite{topoopt} & N/A       & \checkmark & \texttimes & \texttimes \\
Helix~\cite{helix}                           & \texttimes & \checkmark$^b$ & \texttimes & \checkmark \\
GORGO~\cite{gorgo}                           & \checkmark & \checkmark$^c$ & \checkmark & \checkmark \\
\textbf{NetKV}                               & \checkmark & \checkmark & \checkmark & \checkmark \\
\bottomrule
\multicolumn{5}{@{}l}{\scriptsize $^a$Placement only. $^b$Offline, link-level. $^c$Geo/WAN.}
\end{tabular}
\end{table}

\section{System Model}
\label{sec:system}

\subsection{Cluster Architecture}

We consider a GPU cluster organised as a multi-tier fat-tree~\cite{fattree}, which is the dominant datacenter network architecture for AI workloads. The cluster consists of $N$ GPU instances partitioned into a prefill pool $\mathcal{P}$ and a decode pool $\mathcal{D}$. Instances are organised into \emph{locality domains} determined by the physical topology and indexed by tier. Tier~0 instances share the same physical node and communicate over NVLink or PCIe at 100--600\,GB/s. Tier~1 instances share a rack and communicate over RDMA over Converged Ethernet (RoCE)~\cite{dcqcn} through a Top-of-Rack (ToR) switch at 25--100\,Gbps. Tier~2 instances share a pod and traverse a single aggregation or spine hop at 10--25\,Gbps. Tier~3 instances cross pod boundaries through the core layer at 5--12\,Gbps. Let $\tau(p,d) \in \{0,1,2,3\}$ denote the locality tier of an instance pair $(p,d)$. The static bandwidth for tier $k$ is $B_k$ and the base latency is $L_k$, both known to the operator from hardware specifications and topology.

\subsection{Request and KV Cache Model}

Requests arrive according to a stochastic process. Each request $r$ has input length $\ell_r$ tokens. After prefill on instance $p$, the scheduler selects a decode instance. The KV cache size is deterministic given the model and sequence length:
\begin{equation}
s_r = 2 \cdot n_\text{layers} \cdot n_\text{kv\_heads} \cdot d_\text{head} \cdot \ell_r \cdot b_\text{elem}
\end{equation}
where $n_\text{layers}$, $n_\text{kv\_heads}$, $d_\text{head}$ are model parameters and $b_\text{elem}$ is bytes per element (2 for FP16). For Llama-3-70B~\cite{llama3} (80 layers, 8 KV heads, 128 head dim, Grouped-Query Attention (GQA)~\cite{gqa}), the aggregate KV size is 320\,KB/token; under TP=4, each shard is 80\,KB/token.

We use block-level prefix matching with block size $B_\text{tok}=16$ tokens. Writing $h_r$ for the request's block-hash sequence and $\mathcal{K}_d$ for the cache contents at instance $d$, the cache hit length is $\lambda_r(d) = B_\text{tok} \cdot |\text{LCP}_\text{block}(h_r, \mathcal{K}_d)|$, the number of tokens covered by the longest common block-aligned prefix. The effective transfer size is:
\begin{equation}
s_r^\text{eff}(d) = s_r \cdot \left(1 - \frac{\lambda_r(d)}{\ell_r}\right)
\end{equation}

\subsection{Decode Instance State}

Each decode instance $d \in \mathcal{D}$ maintains local state visible to the scheduler: $m_d$ (available GPU memory), $q_d$ (queued requests), $\beta_d$ (current batch size), and $\mathcal{K}_d$ (set of prefix block hashes in the KV cache). We model the iteration time as $\bar{t}_\text{iter}(\beta) = a + b\beta$, a piecewise-linear function fitted from published profiling data~\cite{distserve}.

\subsection{Network Cost Model}
\label{sec:netcost}

The transfer time from prefill instance $p$ to decode instance $d$ for effective payload $s$ is:
\begin{equation}
T_\text{transfer}(p,d,s) = \frac{s}{B_\text{eff}(p,d)} + L_{\tau(p,d)}
\end{equation}
where effective bandwidth depends on three factors:
\begin{equation}
\label{eq:beff}
B_\text{eff}(p,d) = \frac{B_{\tau(p,d)} \cdot (1 - c_{\tau(p,d)})}{1 + n_\text{inflight}^{\tau}(p)}
\end{equation}

The three factors are: (1)~the \emph{static tier bandwidth} $B_{\tau(p,d)}$, determined by hardware and topology; (2)~the \emph{external congestion factor} $c_{\tau(p,d)} \in [0,1)$, reflecting current utilisation by traffic external to the scheduler's in-flight KV transfers; and (3)~the \emph{self-contention} count $n_\text{inflight}^\tau(p)$, the number of concurrent KV transfers the scheduler has in flight from $p$ on tier $\tau$.

The external and self-contention terms are kept separate to avoid double counting. The operator's telemetry pipeline excludes the scheduler's own KV flows when computing $c_\tau$, which is straightforward when KV transfers use a marked Differentiated Services Code Point (DSCP) class or a dedicated Quality-of-Service (QoS) queue, and $n_\text{inflight}^\tau$ accounts for them directly. When telemetry cannot separate the two, the scheduler sets $n_\text{inflight}^\tau \equiv 0$ and relies on $c_\tau$ alone.

The form of \eqref{eq:beff} composes two standard approximations: $B_\tau(1-c_\tau)$ is the residual-bandwidth approximation used in fluid analyses of TCP and RDMA congestion control~\cite{dcqcn,hpcc} when the network is below saturation, and $1/(1+n_\text{inflight}^\tau)$ is the steady-state share assumed by max-min fairness among the scheduler's in-flight flows plus the new one, the fairness model Data Center Quantized Congestion Notification (DCQCN) converges to over its convergence horizon~\cite{dcqcn}. The product form treats external traffic and the scheduler's own flows as sharing a fair queue independently, which is exact when the two populations are stationary and uncorrelated. The oracle refresh interval $\Delta_\text{oracle} \geq 1$\,s used throughout this work is far longer than the DCQCN convergence horizon, so the steady-state approximation is the right operating regime.

The decision to collapse per-link congestion to per-tier aggregates is a deliberate simplification. It is exact for Tier~0 and Tier~1, where a single ToR uplink carries all cross-rack traffic for a server or rack, and it is approximate for Tier~2 and Tier~3, where Equal-Cost Multi-Path (ECMP) routing spreads flows across multiple spine links and two cross-pod flows can experience different link-level congestion. We discuss the sensitivity to ECMP hash collisions in \S\ref{sec:discussion}.

\emph{Worked example.} Consider a 32K-token retrieval-augmented generation (RAG) request on Llama-3-70B. The aggregate KV cache is $s_r = 320\,\text{KB} \times 32{,}768 \approx 10$\,GB (2.5\,GB per TP=4 shard). The scheduler weighs two candidate decode instances:
\begin{itemize}[leftmargin=*,itemsep=1pt,topsep=2pt]
\item $d_1$, on the same pod as the prefill instance (tier $\tau{=}2$, $B_2{=}50$\,Gbps $\approx 6.25$\,GB/s), with a 50\% prefix hit and one of the scheduler's own KV transfers already in flight on that tier ($n_\text{inflight}^2 = 1$);
\item $d_2$, on a different pod ($\tau{=}3$, $B_3{=}25$\,Gbps $\approx 3.125$\,GB/s), with a much warmer 90\% prefix hit and an idle tier ($n_\text{inflight}^3 = 0$).
\end{itemize}
Under a moderate background congestion of $c_2 = c_3 = 0.2$, the effective transfer sizes are $s_r^\text{eff}(d_1) = 5$\,GB and $s_r^\text{eff}(d_2) = 1$\,GB. Applying~\eqref{eq:beff}:
\begin{align*}
B_\text{eff}(d_1) &= \tfrac{6.25 \cdot 0.8}{1+1} = 2.5\,\text{GB/s}, \\
B_\text{eff}(d_2) &= \tfrac{3.125 \cdot 0.8}{1+0} = 2.5\,\text{GB/s}.
\end{align*}
The transfer times are $T_\text{transfer}(d_1) = 5/2.5 = 2.0$\,s and $T_\text{transfer}(d_2) = 1/2.5 = 0.4$\,s, so the warm-cache cross-pod candidate $d_2$ wins by a factor of five even though it sits on the slower tier. Now perturb the cross-pod congestion alone to $c_3 = 0.5$, keeping every other parameter fixed. The cross-pod effective bandwidth drops to $1.56$\,GB/s, $T_\text{transfer}(d_2)$ rises to $0.64$\,s, and the relative gap collapses from $5\times$ to $3\times$. As soon as queuing differences favour $d_1$ by a few hundred milliseconds, the same-pod candidate wins. This example exhibits all three terms of the cost model: cache locality favours $d_2$, network distance favours $d_1$, and dynamic congestion is the term that flips the verdict.

\subsection{Network Cost Oracle Interface}
\label{sec:oracle}

The oracle is the sole information exchange required between operator and scheduler. The operator sends the scheduler four maps every $\Delta_\text{oracle}$ seconds: a static \texttt{tier\_map} from instance pairs to $\{0,1,2,3\}$, a static \texttt{tier\_bandwidth} map from tier to Gbps, a static \texttt{tier\_latency} map from tier to microseconds, and a dynamic \texttt{congestion} map from tier to $[0,1)$. Optionally, the scheduler may send a per-transfer \texttt{TransferIntent} containing source, destination, payload size, and priority, allowing the operator to anticipate large flows.

The static components derive from existing cluster topology labels such as \texttt{topology.kubernetes.io/zone}  (for Kubernetes) and rack labels, and the static bandwidths are hardware specifications. Only the congestion signal is new, and it is computable from standard switch telemetry (In-band Network Telemetry, sFlow, or SNMP counters) with a lightweight aggregation service. The interface therefore requires no changes to the transport layer (NIXL, NCCL, RDMA), the inference engine (vLLM, SGLang), or the network hardware.

\section{Problem Formulation}
\label{sec:formulation}

\subsection{Objective}

Given request $r$ completing prefill on instance $p$, the scheduler selects a decode instance to minimise the post-prefill latency, which equals TTFT minus the prefill time:
\begin{equation}
\label{eq:objective}
d^* = \arg\min_{d \in \mathcal{D}_r} T_\text{transfer}(p,d,s_r^\text{eff}(d)) + T_\text{queue}(d) + T_\text{decode}(d),
\end{equation}
where the three terms are the network cost, the queuing delay, and the first decode-step time. The feasible set $\mathcal{D}_r = \{d \in \mathcal{D} \mid m_d \geq s_r^\text{eff}(d) + m_\text{min}\}$ is defined by memory constraints, with $m_\text{min}$ a per-instance reserve held back for activations and one additional decode step. This objective directly minimises TTFT, the primary real-time constraint.

The objective treats transfer, queuing, and the first decode step as additive. This corresponds to the execution model in which the decode instance waits for the full KV cache to land before scheduling its first step. Production systems that pipeline transfer with computation (Splitwise, Dynamo, Mooncake) reduce the visible $T_\text{transfer}$ contribution but do not eliminate it under tight memory budgets or long contexts. Our serial model is therefore a conservative upper bound, and internal comparisons remain valid since all baselines are evaluated under the same model.

\subsection{Component Models}

Continuous-batching engines admit new requests at every iteration boundary, so the queuing delay is modelled as
\begin{equation}
T_\text{queue}(d) = \max\!\left(0,\; q_d - (\beta_\text{max} - \beta_d)\right) \cdot \bar{t}_\text{iter}(\beta_d).
\end{equation}
A request joins the active batch immediately if free slots exist, and otherwise it waits one iteration per blocked predecessor. The first decode step, after the request joins the batch on $d$, takes
\begin{equation}
T_\text{decode}(d) = \bar{t}_\text{iter}(\beta_d + 1).
\end{equation}

\subsection{The Three-Way Tradeoff}

The objective encodes a tradeoff that is absent from current schedulers and that involves cache locality, compute load, and the network path. A warm-cache instance on a congested cross-pod link may be slower than a cold-cache same-rack instance, because the cache hit reduces $s_r^\text{eff}$ but the low $B_\text{eff}$ on the cross-pod path more than offsets that reduction. Symmetrically, a lightly loaded distant instance has a low $T_\text{queue}$ but a high $T_\text{transfer}$, while a heavily loaded nearby instance has the opposite profile. The cache-versus-load tradeoff is already present in cache+load-aware (CLA) schedulers (used as our baseline in \S\ref{sec:eval}); NetKV subsumes it and adds the network dimension on top.

\begin{proposition}[Suboptimality of network-oblivious scheduling]
\label{prop:suboptimal}
Consider two tiers with bandwidth ratio $k = B_1/B_3 \geq 1$ and two feasible instances: $d_1$ (same-rack, hit ratio $\rho_1$) and $d_2$ (cross-pod, $\rho_2 \geq \rho_1$). A pure cache-aware scheduler picks $d_2$, but $d_1$ achieves lower TTFT whenever:
\begin{equation}
\begin{aligned}
(1-\rho_1) &< k \cdot \tfrac{1-c_1}{1-c_3} \cdot (1-\rho_2) \\
           &\quad + \tfrac{B_1(1-c_1)}{s_r}\bigl(T_\text{queue}(d_2) - T_\text{queue}(d_1)\bigr).
\end{aligned}
\end{equation}
\end{proposition}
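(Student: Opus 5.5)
The plan is to write the post-prefill latency of each candidate explicitly using the objective \eqref{eq:objective} and the network cost model of \S\ref{sec:netcost}, subtract the two, and rearrange until the displayed inequality appears. Concretely, I will take the simplest instantiation of \eqref{eq:beff}, with no self-contention ($n_\text{inflight}^{1}=n_\text{inflight}^{3}=0$). Then $B_\text{eff}(p,d_1)=B_1(1-c_1)$ and $B_\text{eff}(p,d_2)=B_3(1-c_3)$. Writing $\rho_i=\lambda_r(d_i)/\ell_r$, we have $s_r^\text{eff}(d_i)=s_r(1-\rho_i)$, and the two latencies become
\[
T(d_i)=\frac{s_r(1-\rho_i)}{B_{\tau_i}(1-c_{\tau_i})}+L_{\tau_i}+T_\text{queue}(d_i)+T_\text{decode}(d_i),
\]
with $\tau_1=1$ and $\tau_2=3$. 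The claim that a pure cache-aware scheduler picks $d_2$ is immediate: such a scheduler maximises $\lambda_r(d)$, and $\rho_2\ge\rho_1$ (ties broken toward the higher hit).

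The core step is the comparison $T(d_1)<T(d_2)$. I will handle the non-transfer, non-queue terms by showing they only help $d_1$ or cancel. Since $L_1\le L_3$ in the tier hierarchy, dropping the latency terms yields a sufficient condition. For the decode terms, I will assume $T_\text{decode}(d_1)\le T_\text{decode}(d_2)$, or else absorb the difference into the queue term; both are of the form $\bar t_\text{iter}(\beta+\cdot)$, and the proposition's statement only tracks the queue difference.

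What remains is
\[
\frac{s_r(1-\rho_1)}{B_1(1-c_1)}+T_\text{queue}(d_1)<\frac{s_r(1-\rho_2)}{B_3(1-c_3)}+T_\text{queue}(d_2).
\]
Multiplying through by the positive quantity $B_1(1-c_1)/s_r$, using $c_\tau<1$, and substituting $B_1/B_3=k$ gives exactly the stated inequality. This is routine algebra.

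The step I expect to need the most care is stating precisely which modelling assumptions make this a clean ``whenever'' rather than an ``if and only if''. These are the self-contention, base-latency and decode-step terms. I will present the result as a sufficient condition and note that it becomes exact when $n_\text{inflight}$ and the $L$ and $T_\text{decode}$ differences vanish.

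Finally, to connect with the ``arbitrarily suboptimal'' claim in \S\ref{sec:intro}, I will remark on what happens as $s_r\to\infty$ with the $\rho_i$, $c_\tau$ and queue states fixed. The queue correction in the condition then vanishes, so the condition reduces to $(1-\rho_1)<k\frac{1-c_1}{1-c_3}(1-\rho_2)$, which is independent of context length. The excess latency $T(d_2)-T(d_1)$ grows linearly in $s_r$ with slope
\[
\frac{1-\rho_2}{B_3(1-c_3)}-\frac{1-\rho_1}{B_1(1-c_1)}>0.
\]
Hence the absolute TTFT penalty of the cache-only choice is unbounded as context length grows.
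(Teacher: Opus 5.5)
Your proposal is correct and follows the same route as the paper's proof. You compare the transfer-plus-queue latencies of $d_1$ and $d_2$, substitute $B_3 = B_1/k$, multiply through by the positive factor $B_1(1-c_1)/s_r$, and rearrange; the paper's one-line argument silently drops self-contention, the base latencies $L_\tau$, and the $T_\text{decode}$ terms, so your explicit assumptions ($n_\text{inflight}^\tau = 0$, $L_1 \le L_3$, $T_\text{decode}(d_1) \le T_\text{decode}(d_2)$) usefully make precise that the displayed inequality is a sufficient condition.
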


\begin{proof}
$d_1$ achieves lower post-prefill latency than $d_2$ when $s_r(1-\rho_1)/(B_1(1-c_1)) + T_\text{queue}(d_1) < s_r(1-\rho_2)/(B_3(1-c_3)) + T_\text{queue}(d_2)$. Substituting $B_3 = B_1/k$, dividing by $s_r/(B_1(1-c_1))$, and rearranging gives the result.
\end{proof}

\emph{Numerical example.} With $\rho_1{=}0$, $\rho_2{=}0.5$, equal congestion, equal queues, and $k{=}4$ (a realistic same-rack versus cross-pod ratio), the inequality reads $1 < 4 \cdot 0.5 = 2$ and holds. The network-oblivious choice therefore doubles the transfer time. The gap widens with context length, since $s_r$ scales linearly with $\ell_r$ and amplifies the transfer term while $T_\text{queue}$ remains constant, and the gap also widens with congestion asymmetry ($c_3 > c_1$).

\section{Algorithm}
\label{sec:algorithm}

\subsection{NetKV Scheduling Algorithm}

\begin{figure}[t]
\centering
\small
\fbox{\parbox{0.95\columnwidth}{
\textbf{Algorithm 1: NetKV Decode Instance Selection}\\[2pt]
\textbf{Input:} Request $r$, prefill instance $p$, pool $\mathcal{D}$, oracle $O$\\
\textbf{Output:} Selected decode instance $d^*$\\[2pt]
1: $\mathcal{D}_r \gets \{d \in \mathcal{D} \mid m_d \geq s_r^\text{eff}(d) + m_\text{min}\}$\\
2: \textbf{if} $|\mathcal{D}_r| = 0$ \textbf{then reject}($r$)\\
3: \textbf{for each} $d \in \mathcal{D}_r$ \textbf{do}\\
4: \quad $\tau \gets O.\text{tier\_map}(p,d)$\\
5: \quad $B_\text{eff} \gets O.B[\tau] \cdot (1{-}O.c[\tau]) \,/\, (1{+}n_\text{inflight}[\tau][p])$\\
6: \quad $\lambda \gets B_\text{tok} \cdot |\text{LCP}_\text{block}(h_r, \mathcal{K}_d)|$\\
7: \quad $s_\text{eff} \gets s_r \cdot (1 - \lambda/\ell_r)$\\
8: \quad $T_\text{xfer} \gets s_\text{eff}/B_\text{eff} + O.L[\tau]$\\
9: \quad $T_\text{queue} \gets \max(0,\, q_d {-} (\beta_\text{max}{-}\beta_d)) \cdot \bar{t}_\text{iter}(\beta_d)$\\
10: \quad $T_\text{decode} \gets \bar{t}_\text{iter}(\beta_d{+}1)$\\
11: \quad $C[d] \gets T_\text{xfer} + T_\text{queue} + T_\text{decode}$\\
12: \textbf{end for}\\
13: $d^* \gets \arg\min_{d \in \mathcal{D}_r} C[d]$\\
14: $n_\text{inflight}[\tau(p,d^*)][p] \mathrel{+}= 1$\quad // decremented in transfer-done callback\\
15: \textbf{return} $d^*$
}}
\label{alg:netkv}
\end{figure}

Algorithm~1 integrates into the existing scheduling loop. Lines~5--8 add network cost computation, which is a tier lookup, two multiplications, and one division per candidate. The prefix cache lookup on line~6 is already performed by cache-aware schedulers, so NetKV adds negligible overhead on top of the existing scoring path.

\subsection{Complexity}

The algorithm is $O(|\mathcal{D}_r|)$ per decision. In our 64-GPU evaluation with TP=4, the candidate pool after memory filtering satisfies $|\mathcal{D}_r| \leq 12$. The oracle values ($B[\tau]$, $c[\tau]$) are cached locally and refreshed every $\Delta_\text{oracle}$ seconds, with a default of 1\,s, so the per-decision network overhead is effectively zero.

\subsection{Self-Contention Tracking}

The counter $n_\text{inflight}^\tau(p)$ on line~14 prevents the scheduler from over-committing to nearby decode instances when it dispatches multiple requests from the same prefill instance in quick succession. It is incremented on dispatch and decremented by polling the inference engine's existing transfer-complete API: in vLLM, this is \texttt{KVConnectorBase\_V1.get\_finished(finished\_req\_ids)}, which returns the sets of request IDs whose KV-cache send and receive operations have completed and is the same call the engine already uses to release prefill-side buffers; Dynamo exposes an equivalent completion event through its KV router. No new transport-layer notification is required. We cap the counter at a configurable maximum (default 16, roughly the NIC's saturated flow count) to prevent runaway under sustained overload. This signal is available at no cost since the scheduler itself initiated the transfers, and it provides a significant fraction of NetKV's benefit even when the operator does not yet expose dynamic congestion data.

\subsection{Oracle Staleness Analysis}

The congestion signal $c_\tau$ is refreshed every $\Delta_\text{oracle}$ seconds; between refreshes the scheduler reads a stale value $\hat{c}_\tau$.

\begin{proposition}[Staleness tolerance for tier ranking]
\label{prop:staleness}
Let $|\hat{c}_\tau - c_\tau^*| \leq \epsilon$ for all tiers. For candidates on tiers $\tau, \tau'$ with $B_\tau \geq B_{\tau'}$, if $B_\tau(1{-}c_\tau^*) > B_{\tau'}(1{-}c_{\tau'}^*)$, the stale ordering preserves the true ordering when:
\begin{equation}
\label{eq:staleness}
\epsilon < \frac{B_\tau(1{-}c_\tau^*) - B_{\tau'}(1{-}c_{\tau'}^*)}{B_\tau + B_{\tau'}}
\end{equation}
\end{proposition}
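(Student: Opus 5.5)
The claim is a perturbation bound on residual bandwidths, so I will compare the \emph{stale} residual bandwidths $\hat{R}_\tau = B_\tau(1-\hat{c}_\tau)$ and $\hat{R}_{\tau'} = B_{\tau'}(1-\hat{c}_{\tau'})$ with their true counterparts $R^*_\tau = B_\tau(1-c^*_\tau)$ and $R^*_{\tau'} = B_{\tau'}(1-c^*_{\tau'})$. I interpret ``tier ranking'' as the ordering of the per-tier residual bandwidth $B_\tau(1-c_\tau)$. The self-contention divisor $1+n^\tau_\text{inflight}(p)$ in \eqref{eq:beff} is maintained locally by the scheduler and is not stale, so it is the same factor in both the stale and true expressions. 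I will therefore state the result for the residual-bandwidth numerator, equivalently for equal in-flight counts, and note that the argument extends verbatim to $B_\text{eff}$ when the counts are included in the margin. The hypothesis $R^*_\tau > R^*_{\tau'}$ fixes the true ordering, and the goal is to show $\hat{R}_\tau > \hat{R}_{\tau'}$.

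The key step is a worst-case bound. Because $|\hat{c}_\tau - c^*_\tau| \le \epsilon$, we have $\hat{R}_\tau = R^*_\tau - B_\tau(\hat{c}_\tau - c^*_\tau) \ge R^*_\tau - B_\tau\epsilon$. By the same reasoning, $\hat{R}_{\tau'} \le R^*_{\tau'} + B_{\tau'}\epsilon$. Subtracting these gives
\[
\hat{R}_\tau - \hat{R}_{\tau'} \;\ge\; \bigl(R^*_\tau - R^*_{\tau'}\bigr) - \epsilon\,(B_\tau + B_{\tau'}).
\]
The right-hand side is strictly positive exactly when condition \eqref{eq:staleness} holds, which yields the claim. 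I will also remark that the bound is tight: the adversarial choice $\hat{c}_\tau = c^*_\tau + \epsilon$, $\hat{c}_{\tau'} = c^*_{\tau'} - \epsilon$ attains it, provided these values stay in $[0,1)$. The assumption $B_\tau \ge B_{\tau'}$ is not needed for the inequality itself. It only signals the regime of interest, the faster tier, where the denominator is dominated by $B_\tau$.

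Finally, I will carry the ordering over to the per-candidate transfer term. For two candidates with equal effective payload and the same self-contention count, $s/B_\text{eff} + L$ is decreasing in the residual bandwidth, so the transfer-time ranking is also preserved. The main obstacle is scope rather than algebra. Once candidates differ in $s^\text{eff}_r(d)$ or in $n_\text{inflight}$, ``tier ranking'' no longer determines the overall $C[d]$ ranking. I will handle this by stating explicitly that the proposition concerns the tier-level bandwidth ordering, and by pointing out that the same two-sided perturbation argument, applied to $C[d]$ with the payload weights $s^\text{eff}/(1+n)$ multiplying the $\epsilon$ terms, gives the analogous margin for full candidate rankings.
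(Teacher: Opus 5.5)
Your argument is correct and is essentially the paper's proof: both shift each stale residual bandwidth by at most $B\epsilon$ and compare the resulting worst-case loss $(B_\tau + B_{\tau'})\epsilon$ against the true margin $B_\tau(1-c^*_\tau) - B_{\tau'}(1-c^*_{\tau'})$. The paper only evaluates the adversarial choice $\hat{c}_\tau = c^*_\tau + \epsilon$, $\hat{c}_{\tau'} = c^*_{\tau'} - \epsilon$ and asserts that it is the worst case. Your uniform lower bound on $\hat{R}_\tau - \hat{R}_{\tau'}$ justifies that claim directly, and your treatment of the locally maintained $1+n_\text{inflight}$ divisor as unchanged states explicitly an assumption the paper leaves implicit.
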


\begin{proof}
The worst-case ranking inversion has the stale oracle inflate the slower tier and deflate the faster: $\hat{c}_\tau = c_\tau^* + \epsilon$, $\hat{c}_{\tau'} = c_{\tau'}^* - \epsilon$. Then $\hat{B}_\text{eff}(\tau) = B_\tau(1{-}c_\tau^*) - B_\tau\epsilon$ and $\hat{B}_\text{eff}(\tau') = B_{\tau'}(1{-}c_{\tau'}^*) + B_{\tau'}\epsilon$. Ordering is preserved ($\hat{B}_\text{eff}(\tau) > \hat{B}_\text{eff}(\tau')$) iff $B_\tau(1{-}c_\tau^*) - B_{\tau'}(1{-}c_{\tau'}^*) > (B_\tau + B_{\tau'})\epsilon$, yielding~\eqref{eq:staleness}.
\end{proof}

\emph{Numerical interpretation.} With $B_1/B_3 = 4$ (a 4:1 oversubscribed fat-tree, matching our evaluation) and moderate congestion $c_1^* = c_3^* = 0.3$, the bound becomes $\epsilon < (4 \cdot 0.7 - 1 \cdot 0.7)/(4 + 1) = 0.42$. The oracle can therefore err by up to 42 percentage points before the tier ordering inverts, which is a sizeable tolerance. At the other extreme, if the faster tier is near saturation ($c_\tau^* \to 1$), the right-hand side of (\ref{eq:staleness}) becomes negative and no staleness tolerance exists. This is expected, because at saturation the tier ordering is determined by instantaneous congestion rather than by static bandwidth.

Note that Proposition~\ref{prop:staleness} addresses inversion of the bandwidth ordering only. The queue and cache terms in the full cost $C(d)$ are unaffected by oracle staleness, so the composite cost is at least as well ranked under stale congestion as the bandwidth term alone.

\section{Evaluation}
\label{sec:eval}

\subsection{Experimental Setup}
\label{sec:setup}

We evaluate NetKV \emph{in simulation}. We do not have access to a physical GPU cluster for this work; all results below come from a discrete-event, flow-level simulator described in \S\ref{sec:simulator}, configured to model a 64-H100 cluster. Where we say ``we run Llama-3-70B'' or ``each request's KV transfer is realised as four parallel flows,'' we mean the simulator executes the inference timing model and the network model under those parameters. The validation status of the simulator and the provenance of the timing model are discussed below.

The simulated cluster is a fat-tree of 2 pods, 2 racks per pod, 2 servers per rack, and 8 GPUs per server (64 H100-class GPUs total). The per-tier bandwidths are $B_0{=}450$\,GB/s (NVLink), $B_1{=}100$\,Gbps (ToR), $B_2{=}50$\,Gbps (2:1 oversubscription), and $B_3{=}25$\,Gbps (4:1 oversubscription); base latencies are $L_0{=}1\,\mu$s, $L_1{=}3\,\mu$s, $L_2{=}8\,\mu$s, $L_3{=}15\,\mu$s. The simulator models Llama-3-70B~\cite{llama3} at TP=4 on both prefill and decode, yielding 16 instances in total (4 prefill, 12 decode) with $\beta_\text{max}=64$ and continuous batching~\cite{orca}; each request's KV transfer is modelled as four parallel GPU-to-GPU flows sharing the source NIC. TP=4 is chosen over the more common TP=8 because it yields 16 instances on the 64-GPU cluster (versus 8 at TP=8), which exposes the four-tier topology richly enough for the locality-aware experiments. Llama-3-70B in FP16 fits at TP=4 with comfortable headroom for KV cache and activations (35\,GB of weights per GPU plus 45\,GB of free High Bandwidth Memory (HBM)); Experiment~7 sanity-checks the TP=4 choice by scaling the simulated topology while holding the per-instance configuration fixed.

The inference-timing component of the simulator is the iteration time $\bar{t}_\text{iter}(\beta)$ and prefill latency $T_\text{prefill}(\ell)$. Lacking hardware to profile, we fit these as piecewise-linear functions $\bar{t}_\text{iter}(\beta) = a + b\beta$ and $T_\text{prefill}(\ell) = c\ell + d$ from three published sources: DistServe~\cite{distserve}, vLLM v0.6 benchmarks, and the MLPerf Inference v5.0 datacenter results~\cite{mlperf} (using, in particular, the Juniper Networks 32$\times$H100 Llama-2-70B offline submission as an aggregate sanity check). The fitted aggregate falls within 25\% of the MLPerf number at the matching TP and batch-size configuration. We bias the fit toward faster decode than the published midpoint, so the network term occupies a smaller fraction of TTFT than the optimistic interpretation would suggest; the TTFT improvements we report are therefore a lower bound on what the fitted-conservative regime supports.

The workload is driven by the Mooncake production traces~\cite{mooncake}, comprising 23K real requests with arrival timestamps and input/output lengths. Timestamps are compressed by a single multiplicative factor to achieve the target arrival rate while preserving the burstiness of the original trace. We define three workload profiles. The chatbot profile filters to inputs no larger than 8K tokens with prefix-sharing probability $p_\text{share}=0.3$ and a 2\,s TTFT SLO. The RAG/enterprise profile filters to inputs in $[4\text{K}, 64\text{K}]$ with $p_\text{share}=0.7$ and a 5\,s SLO. The long-context profile filters to inputs above 16K with $p_\text{share}=0.1$ and a 10\,s SLO. Each run simulates 5\,s of warmup and 15\,s of measurement, and every data point is repeated with five independent seeds for variance.

We compare against five baselines: round-robin (RR); load-aware (LA) minimising $T_\text{queue}+T_\text{decode}$; cache-aware (CA) maximising prefix hit length with load as tiebreaker; cache+load-aware (CLA*) with per-workload tuned weights matching the scoring component of Mooncake's Conductor and llm-d's composite scorer; and NetKV-Static, which adds the static tier map and the self-contention counter $n_\text{inflight}^\tau$ to CLA* but withholds the live congestion oracle (labelled \emph{+Self-Contention} in the ablation plots and Table~\ref{tab:ablation}, reflecting the component it adds on top of NetKV-Topo-Only). CLA* weights are tuned by a $10\times10$ grid search in $(w_\text{cache}, w_\text{load}) \in [0.1, 2.0]^2$ at 80\% capacity, on a tuning slice at Mooncake-trace seconds $[0, 30)$ disjoint from the measurement window at $[60, 75)$. Selected weights are $(1.0, 1.0)$ for chatbot and RAG, $(1.5, 0.7)$ for long-context; the NetKV-Full advantage varies by less than 1.5 pp across the $3\times3$ neighbourhood of these points, and an off-design retune at 250\% load preserves the gap within 0.8 pp. Reported metrics: TTFT (mean, P50, P95, P99), TBT (mean, P95), SLO attainment, goodput, per-tier link utilisation, and mean transfer time. TBT is $\bar{t}_\text{iter}(\beta)$ at the moment each request enters the decode batch.

\subsection{Simulator Design and Validation}
\label{sec:simulator}

Our evaluation uses a discrete-event, flow-level simulator that models each request from arrival through prefill, KV transfer, decode, and completion. The fat-tree fabric is captured at flow granularity with per-link max-min fair sharing, the model used by RDMA congestion control~\cite{hpcc}. Each flow's path is determined by ECMP, modelled as uniform random uplink assignment at flow start so correlated flows can collide even below capacity; on every flow arrival or completion, all co-existing flows on shared links are immediately re-evaluated. Background traffic is modelled as a steady-state link utilisation parameterised by a single offered-load fraction, the standard mean-field approximation in fluid analyses of datacenter networks~\cite{dctcp,hpcc}; it is exact for our cost model when the oracle's 1\,s refresh interval exceeds the autocorrelation horizon of the background flows. Experiment~3 sweeps the offered-load fraction in $\{0, 0.05, 0.1, 0.2, 0.4\}$.

The compute model uses the piecewise-linear iteration-time and prefill-latency fits described above. The continuous-batching engine admits new requests at iteration boundaries, so a request arriving mid-iteration waits for the current step to complete before joining the active batch. Each decode instance maintains a running batch with memory tracking, eviction of completed requests, and an LRU-managed KV-block cache. Scheduler decisions are made using only the state that would be available to a real scheduler: per-instance compute metrics refreshed at each scheduling event, and oracle-provided network metrics refreshed every $\Delta_\text{oracle}$ seconds. The scheduler cannot observe per-flow network state or future arrivals.

We validated the network model against analytical predictions: a single flow on an uncontested path matches its tier bandwidth within 0.1\%, $N$ co-existing flows on a bottleneck each receive $1/N$ of capacity, and fair-share reallocation after flow completion converges within one time step. The inference timing model was cross-validated against DistServe~\cite{distserve}, vLLM~\cite{vllm}, and MLPerf~\cite{mlperf} numbers within 25\%. The 5\,s warm-up window ensures cold-cache and empty-queue transients do not contaminate the 15\,s measurement window.

\subsection{Experiment 1: Load Sweep}
\label{sec:exp1}

We sweep the request rate from 50\% to 250\% of the per-workload calibrated capacity on each of the three workload profiles. Table~\ref{tab:load_sweep} shows the RAG profile in detail; the per-workload summary is reported at the end of the subsection.
\begin{table}[t]
\centering
\caption{Load sweep, RAG profile: mean TTFT (mean$\pm$std over five seeds), TBT, SLO attainment, and transfer time at selected rates. Best value per rate in \textbf{bold}. Goodput tracks SLO attainment $\times$ throughput (omitted for space).}
\label{tab:load_sweep}
\footnotesize
\begin{tabular}{@{}llrrrr@{}}
\toprule
Rate & Scheduler & TTFT (ms) & TBT (ms) & SLO & Xfer (ms) \\
\midrule
\multirow{3}{*}{100\%}
& RR         & $1969{\pm}9$  & 12.74 & 0.907 & 993 \\
& CLA*       & $1812{\pm}15$ & 12.71 & 0.923 & 835 \\
& NetKV-Full & $\mathbf{1598{\pm}10}$ & 12.94 & \textbf{0.944} & \textbf{620} \\
\midrule
\multirow{3}{*}{200\%}
& RR         & $2171{\pm}4$  & 13.01 & 0.887 & 1194 \\
& CLA*       & $1995{\pm}25$ & 12.88 & 0.906 & 1018 \\
& NetKV-Full & $\mathbf{1710{\pm}9}$ & 13.29 & \textbf{0.933} & \textbf{733} \\
\midrule
\multirow{3}{*}{250\%}
& RR         & $2224{\pm}4$  & 13.13 & 0.882 & 1246 \\
& CLA*       & $2068{\pm}15$ & 12.97 & 0.899 & 1091 \\
& NetKV-Full & $\mathbf{1793{\pm}9}$ & 13.42 & \textbf{0.925} & \textbf{816} \\
\bottomrule
\end{tabular}
\end{table}
Table~\ref{tab:load_sweep} shows that NetKV-Full achieves the lowest TTFT at every rate, and the gap widens with load. At 100\% load, NetKV-Full reduces mean TTFT by 18.9\% over RR and 11.8\% over CLA*, with SLO attainment lifting 3.7 pp over RR and 2.1 pp over CLA*; mean transfer time drops from 993\,ms (RR) to 620\,ms. At 200\% the gap widens to 21.2\% over RR and 14.3\% over CLA*. The tails follow the means: P99 TTFT at 100\% drops from 12.3\,s (RR) and 11.2\,s (CLA*) to 9.3\,s, and at 250\% from 14.5\,s and 13.6\,s to 11.1\,s. The TBT cost is small and grows slowly: +0.23\,ms over CLA* at 100\% rising to +0.45\,ms at 250\%, well below any practical TBT SLO threshold (50\,ms or more for interactive use).

\emph{Per-workload summary.} The qualitative finding holds across all three profiles. Peak NetKV-Full TTFT reduction over RR is 12.6\% on chatbot (200\% rate, $601{\pm}1$\,ms vs $685{\pm}1$\,ms), 21.2\% on RAG (200\%), and 23.9\% on long-context (75\%, $6175{\pm}26$\,ms vs $8112{\pm}33$\,ms); the corresponding peaks vs CLA* are 6.8\%, 14.3\%, 12.1\%. NetKV-Full wins at every rate in every workload, with the vs-RR gain monotone in context length and seed std under 30\,ms throughout. Goodput tracks SLO: at RAG 200\%, NetKV-Full delivers 11.37\,rps within SLO vs 11.04 (CLA*) and 10.81 (RR).

\subsection{Experiment 2: Context Length Sweep}
\label{sec:exp2}

To isolate the effect of context length, we fix the workload to RAG at 100\% load and parametrically override the per-request input length, sweeping it from 1K to 64K tokens while keeping the arrival timestamps and request count from the underlying Mooncake trace fixed. Four schedulers participate in this sweep: RR, CA, CLA*, and NetKV-Full.


\begin{table}[t]
\centering
\caption{RAG context-length sweep. $\Delta$TTFT is the NetKV-Full mean-TTFT reduction; $\Delta$SLO is the attainment delta in absolute units. Peaks in \textbf{bold}. Seed std on absolute mean TTFT stays below 20\,ms in the schedulable regime (lengths $\leq 16$K) and below 100\,ms in the overload regime (32K--64K).}
\label{tab:context}
\footnotesize
\begin{tabular}{@{}rrrrrr@{}}
\toprule
Length & vs RR & vs CLA* & $\Delta$TBT & vs RR & vs CLA* \\
       & $\Delta$TTFT & $\Delta$TTFT & (ms) & $\Delta$SLO & $\Delta$SLO \\
\midrule
1024  & $-$9.1\%  & $-$2.2\%  & +0.22 & 0.000 & 0.000 \\
4096  & $-$13.3\% & $-$3.2\%  & +0.27 & 0.000 & 0.000 \\
8192  & $-$15.2\% & $-$10.2\% & +0.26 & 0.000 & 0.000 \\
16384 & $-$\textbf{20.2}\% & $-$\textbf{17.6}\% & +0.17 & +\textbf{0.201} & +\textbf{0.136} \\
32768 & $-$6.7\%  & $-$4.3\%  & +0.03 & $-$0.006 & $-$0.006 \\
65536 & $+$0.0\%  & $+$0.2\%  & 0.00  & 0.000 & 0.000 \\
\bottomrule
\end{tabular}
\end{table}

Table~\ref{tab:context} confirms Proposition~\ref{prop:suboptimal}: NetKV's advantage grows with context length while the KV cache fits the per-instance memory budget and the workload remains schedulable. At input length 16K the advantage peaks at 20.2\% mean TTFT reduction over RR and 17.6\% over CLA*, lifting SLO attainment from 0.791 (RR) to 0.992 (a 20.1\,pp improvement). The 16K length is precisely the regime where the transfer term dominates the TTFT budget: the KV cache is large enough (5\,GB aggregate) for the inter-tier bandwidth gap to matter, and small enough for the request to remain within the SLO. Beyond 16K, the workload exceeds the simulated capacity of all schedulers at 100\% load and SLO attainment collapses uniformly; the network choice no longer determines deadline misses. The TBT overhead stays under 0.27\,ms across the sweep and shrinks to zero in the overload regime.

\subsection{Experiment 3: Topology Sensitivity}
\label{sec:exp3}

We fix the request rate at 100\% capacity and sweep two topology axes independently on each of the three workload profiles. The first axis is the cross-pod oversubscription ratio, ranging from 1:1 (no bandwidth gap between tiers) to 8:1 (extreme oversubscription typical of cost-optimised fabrics). The second axis is the background traffic intensity, ranging from 0\% to 40\% of link capacity. Three schedulers participate in the sweep: CLA*, NetKV-Static, and NetKV-Full. The other three baselines, which lack network awareness, would not add information on these axes. Fig.~\ref{fig:topo_allwl} shows the full per-workload sweep.
From Fig.~\ref{fig:topo_allwl}, RAG mean-TTFT advantage of NetKV-Full over CLA* scales along both axes. At 1:1 oversubscription with no background traffic, the inter-tier bandwidth gap is by construction zero and the improvement is only 3.7\%. At 8:1 with 40\% background traffic, the gap widens to 15.9\% ($3644{\pm}45$ vs $3063{\pm}25$\,ms). NetKV's value strengthens precisely where the network bottleneck is most acute. TBT overhead stays within a 0.05\,ms band across the sweep, showing no sensitivity to topology: the TBT cost is constant under network stress while the TTFT benefit scales with it, the asymmetric tradeoff we sought to demonstrate.

\emph{Per-workload summary} (Fig.~\ref{fig:topo_allwl}). The trend holds across all three profiles. At maximum stress (8:1, 40\% bg) NetKV-Full cuts mean TTFT over CLA* by 11.2\% on chatbot, 15.9\% on RAG, and 11.2\% on long-context (SLO lift $+1.5$/$+2.3$/$+6.9$\,pp, with the long-context jump driven by saturation: a small TTFT cut recovers many deadline misses). At minimum stress (1:1, 0\% bg) the gap is 1.3\%/3.7\%/4.0\%, still favourable. Across the full sweep, NetKV-Full beats CLA* in all 60 cells.

\begin{figure*}[]
\centering
\includegraphics[scale=0.45]{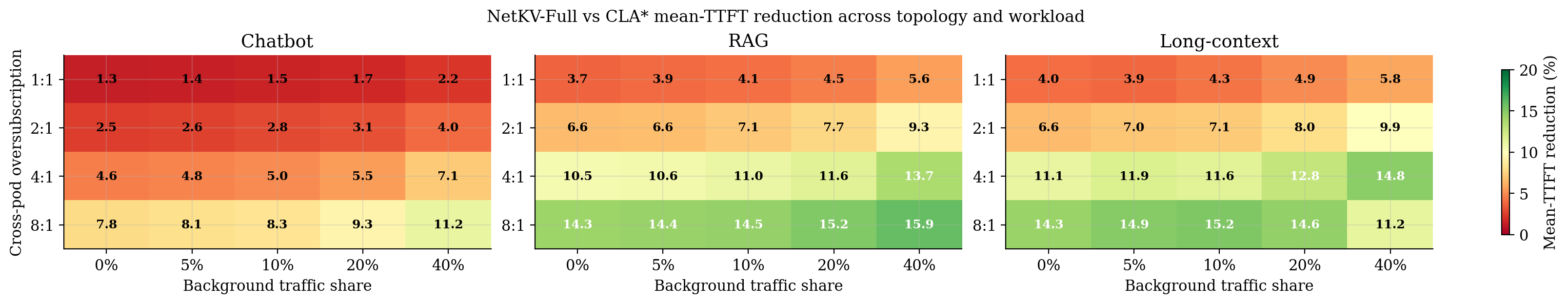}
\caption{NetKV-Full mean-TTFT reduction over CLA* (\%) across the topology sweep for each workload profile. Rows: cross-pod oversubscription ratio. Columns: background-traffic share. NetKV-Full wins in every one of the 60 cells; the reduction grows with both axes and reaches its maximum at the high-stress corner (8:1 oversubscription, 40\% background).}
\label{fig:topo_allwl}
\end{figure*}

\subsection{Experiment 4: Oracle Staleness}
\label{sec:exp4}
We sweep the oracle refresh interval $\Delta_\text{oracle}$ from 100\,ms to 60\,s on the RAG workload at 100\% load, comparing CLA*, NetKV-Static, and NetKV-Full.
\begin{figure*}[]
\centering
\includegraphics[scale=0.4]{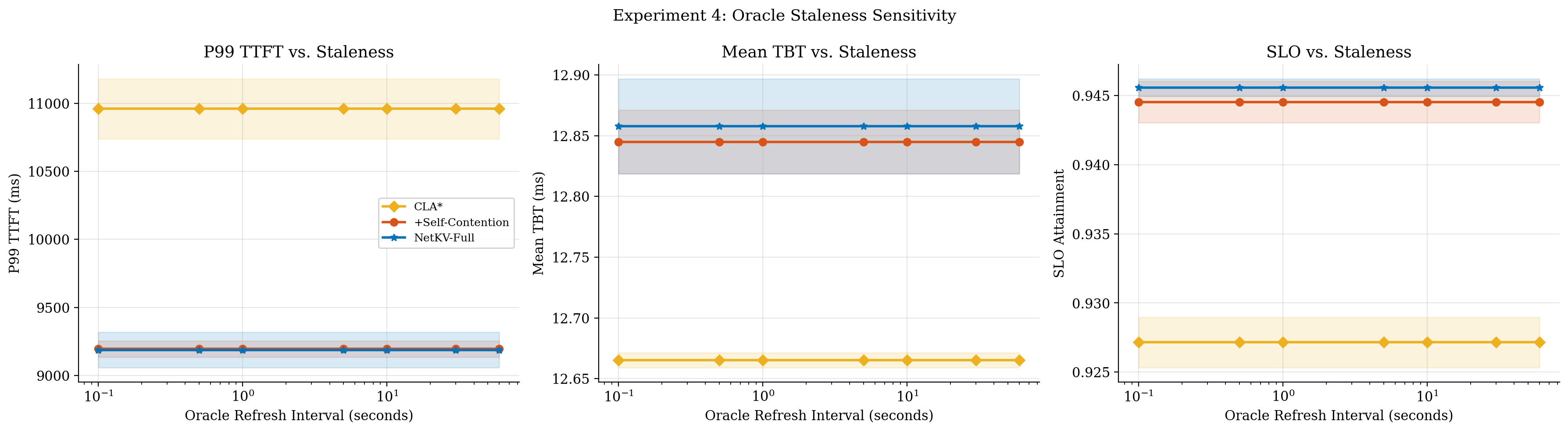}
\caption{Oracle staleness sweep: TTFT, TBT, and SLO are invariant from 100\,ms to 60\,s refresh intervals.}
\label{fig:staleness}
\end{figure*}
All three metrics are essentially invariant across the full range, as Fig.~\ref{fig:staleness} shows NetKV-Full maintains 11.0\% mean TTFT improvement over CLA* (1580\,ms versus 1776\,ms) at every refresh interval, and the SLO attainment delta stays at 1.8 percentage points throughout. Two effects compose to produce this invariance. First, Proposition~\ref{prop:staleness} bounds the worst-case ordering-flip error at 42 percentage points on $c_\tau$ for our 4:1-oversubscribed topology, which is roughly an order of magnitude larger than the measured fluctuation of $c_\tau$ over 60-second windows in our workload, so the tier ordering essentially never inverts. Second, the static tier component dominates the cost (Experiment~6: $>$90\% of the gain from the static tier signal), so even when the dynamic congestion term is stale the static term remains correct. The practical implication is that the oracle can refresh once per minute without degrading scheduling quality. A standard SNMP counter poll at this cadence is sufficient and the telemetry burden on the operator is near zero. The flatness should not be read as evidence that dynamic congestion is unimportant; Experiment~3 shows it contributes meaningfully under the deep-oversubscription, heavy-background-traffic stress regime where the static tier alone leaves a residual gap.

\subsection{Experiment 5: Prefix Sharing Interaction}
\label{sec:exp5}

We sweep the prefix-sharing parameter $p_\text{share}$ from 0.0 to 0.9 on the RAG arrival pattern, comparing CA, CLA*, and NetKV-Full.
\begin{figure*}[t]
\centering
\includegraphics[scale=0.4]{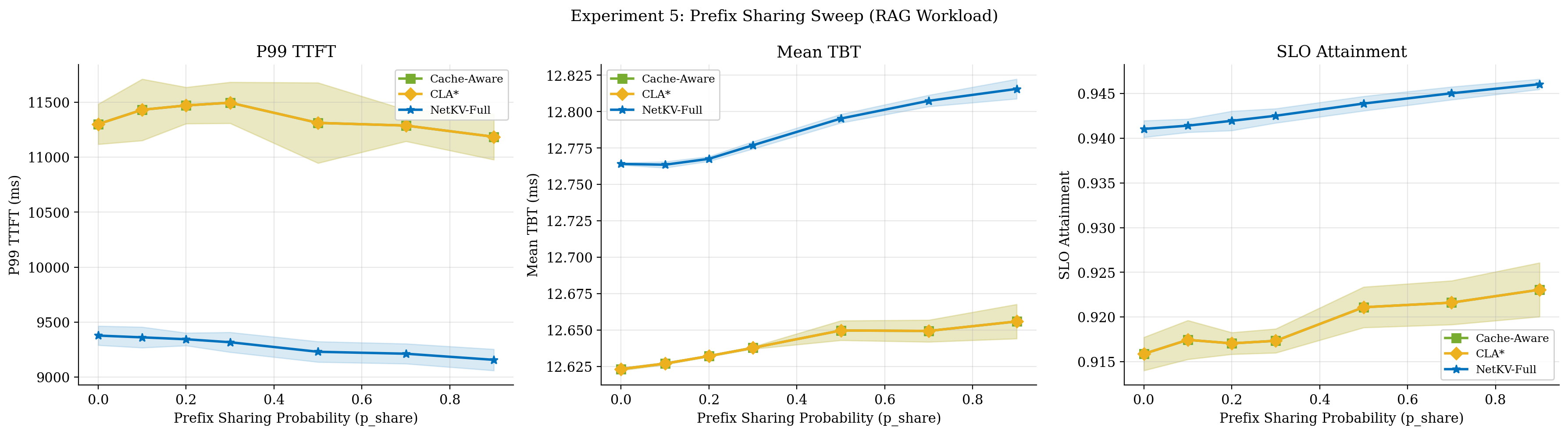}
\caption{Prefix-sharing sweep on the RAG workload: NetKV-Full preserves a roughly constant TTFT advantage over CA and CLA* across the full range, indicating that the network-aware contribution is orthogonal to the cache-aware contribution.}
\label{fig:prefix}
\end{figure*}
Fig.~\ref{fig:prefix} shows the results. Across the full range of $p_\text{share}$ from 0.0 to 0.9, NetKV-Full reduces the mean TTFT by 15.3\% to 15.6\% over both CA and CLA* (since CA and CLA* produce identical decisions when the load component does not break ties), and it lifts the SLO attainment by 2.5 percentage points consistently. The near-constant advantage across the sweep indicates that the network-aware contribution is essentially orthogonal to the cache-aware contribution. The mechanism that underlies the gain is tier-shifting: NetKV routes a large fraction of transfers to Tier~2 (intra-pod, higher bandwidth) rather than the Tier~3 (cross-pod) paths to which CLA* defaults under uniform candidate distribution. The mean transfer time drops from 835\,ms (CLA*) to 620\,ms (NetKV-Full) at the operating point common to this sweep, a 25.8\% reduction.

\subsection{Experiment 6: Ablation Study}
\label{sec:exp6}

To isolate the incremental contribution of each NetKV component, we evaluate a four-step ladder of schedulers on three workload profiles at 100\% load. The ladder begins with CLA*, then adds the static tier map (NetKV-Topo-Only), then the self-contention counter (NetKV-Static), and finally the dynamic congestion signal (NetKV-Full).

\begin{figure*}[t]
\centering
\includegraphics[scale=0.4]{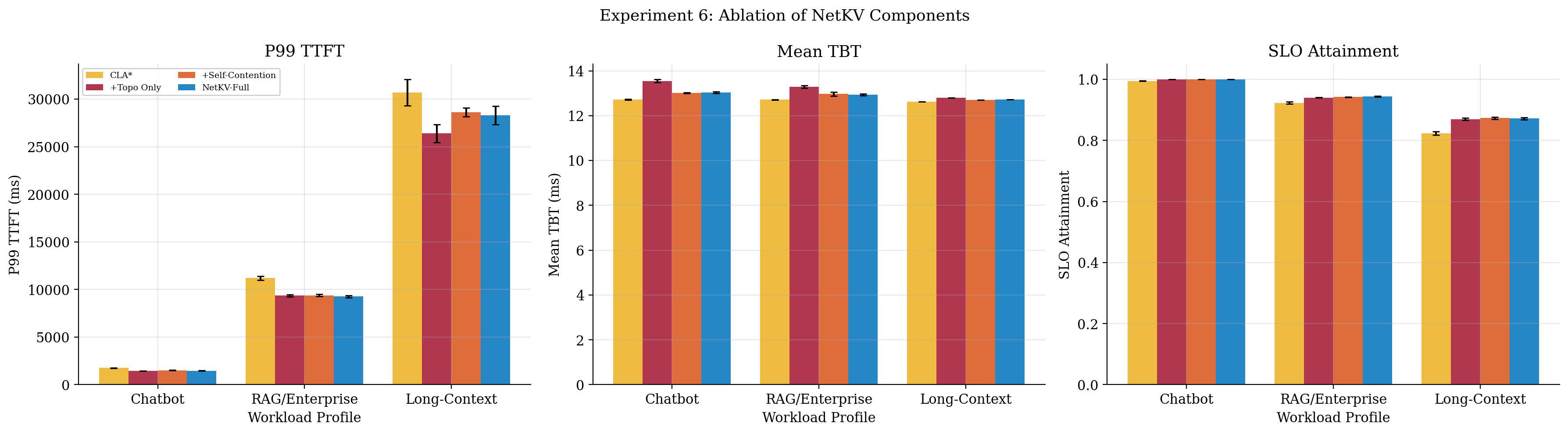}
\caption{Ablation ladder: mean TTFT for CLA*, NetKV-Topo-Only, NetKV-Static, and NetKV-Full across the chatbot, RAG, and long-context workloads.}
\label{fig:ablation}
\end{figure*}

\begin{table}[t]
\centering
\caption{Incremental contribution analysis. TTFT and $\Delta$ are in ms and \%. Seed std per cell: $\leq 7$\,ms (chatbot, 20 seeds), $\leq 15$\,ms (RAG), $\leq 108$\,ms (long-context). All incremental contributions are statistically significant ($p<10^{-4}$, two-sample $t$-test on the chatbot 20-seed run).}
\label{tab:ablation}
\footnotesize
\setlength{\tabcolsep}{3pt}
\begin{tabular}{@{}lrrrrrr@{}}
\toprule
Component & \multicolumn{2}{c}{Chatbot} & \multicolumn{2}{c}{RAG} & \multicolumn{2}{c}{Long-ctx} \\
\cmidrule(lr){2-3} \cmidrule(lr){4-5} \cmidrule(lr){6-7}
 & TTFT & $\Delta$ & TTFT & $\Delta$ & TTFT & $\Delta$ \\
\midrule
CLA*         & 629 & ---      & 1812 & ---      & 7121 & --- \\
+ Static     & 598 & $-$4.9   & 1627 & $-$\textbf{10.2} & 6326 & $-$\textbf{11.2} \\
+ Self-cont. & 596 & $-$0.3   & 1596 & $-$1.9   & 6150 & $-$2.8 \\
+ Dyn.\ cong.& 595 & $-$0.2   & 1592 & $-$0.3   & 6160 & $+$0.2 \\
\midrule
\multicolumn{2}{@{}l}{$\Delta$mean TTFT} & $-$5.4  & & $-$11.9 & & $-$12.1 \\
\multicolumn{2}{@{}l}{$\Delta$P99 TTFT}  & $-$17.3 & & $-$17.5 & & $-$9.4 \\
\multicolumn{2}{@{}l}{$\Delta$SLO (pp)}  & +0.5    & & +2.0    & & +4.8 \\
\multicolumn{2}{@{}l}{$\Delta$TBT (ms)}  & +0.30   & & +0.23   & & +0.10 \\
\bottomrule
\end{tabular}
\end{table}

Table~\ref{tab:ablation} reveals that the static topology signal is the dominant component of NetKV's benefit: adding the static tier map to CLA* delivers 10.2\% (RAG) and 11.2\% (long-context) mean-TTFT reduction, within one percentage point of the full NetKV-Full delta. Self-contention contributes a further 1.9--2.8\% on the harder workloads, and dynamic congestion adds a small residual. The longer-context workloads benefit more because their KV cache, and hence the transfer term, dominates TTFT. The P99 reductions are larger than the mean reductions (e.g., 17.5\% vs 11.9\% on RAG), as the tail benefits most from avoiding cross-pod transfers. The TBT cost from CLA* to NetKV-Full stays below 0.30\,ms because NetKV's routing affects the one-time transfer path while only marginally perturbing the ongoing batch.

The practical implication is that a minimal deployment, in which the operator exposes only the static tier map and no dynamic telemetry, would already capture most of NetKV's benefit. The dynamic congestion signal is the right thing to add when the cluster reaches the network-stress regimes characterised in Experiment~3, but it is not a prerequisite for a useful first deployment.

\subsection{Experiment 7: Scalability}
\label{sec:exp7}

We scale the cluster from 64 GPUs (2 pods) to 1024 GPUs (32 pods) using a flow-level estimator that we cross-validate against the packet-level simulator at 64 and 128 GPUs. The packet-level runs serve as ground truth in the overlap region, while the flow-level estimator carries the trend to the larger scales that are infeasible at the packet level within our wall-clock budget.

\begin{figure*}[t]
\centering
\includegraphics[scale=0.35]{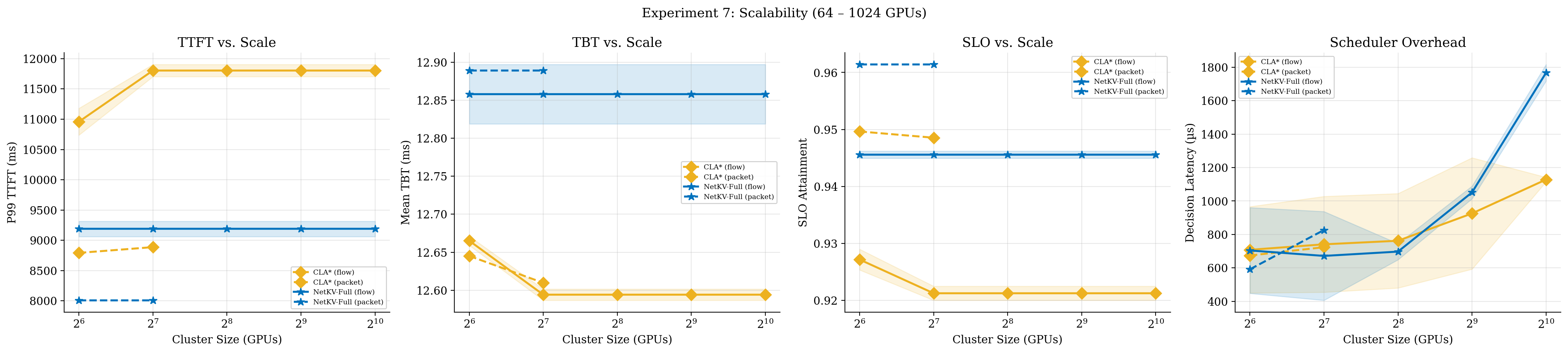}
\caption{Scalability: mean TTFT, mean TBT, SLO attainment, and scheduler decision latency from 64 to 1024 GPUs. The flow-level estimator captures the same relative trend as the packet-level runs in the overlap region.}
\label{fig:scalability}
\end{figure*}

\begin{table}[t]
\centering
\caption{NetKV-Full versus CLA* at increasing cluster scale. Packet rows at 64 and 128 GPUs are ground truth; flow rows at and above 64 are the estimator. ``NK Xfer'' and ``CLA Xfer'' report the mean transfer time (ms) of NetKV-Full and CLA* respectively. $\Delta$SLO is in percentage points. Seed std on absolute mean TTFT is below 16\,ms in every row.}
\label{tab:scale}
\footnotesize
\begin{tabular}{@{}rlrrrr@{}}
\toprule
GPUs & Mode & $\Delta$TTFT & $\Delta$SLO & NK Xfer & CLA Xfer \\
\midrule
64   & packet & $-$7.0\%  & +1.0 & 451 & 559 \\
64   & flow   & $-$11.0\% & +1.8 & 603 & 799 \\
128  & packet & $-$7.4\%  & +1.0 & 451 & 565 \\
128  & flow   & $-$13.6\% & +2.2 & 603 & 851 \\
256  & flow   & $-$13.6\% & +2.2 & 603 & 851 \\
512  & flow   & $-$13.6\% & +1.7 & 603 & 851 \\
1024 & flow   & $-$13.6\% & +1.7 & 603 & 851 \\
\bottomrule
\end{tabular}
\end{table}

Table~\ref{tab:scale} shows three findings. First, in the overlap region between the two simulators (64 and 128 GPUs), the packet-level model reports a 7.0--7.4\% TTFT reduction and the flow-level model reports an 11.0--13.6\% reduction. The two simulators agree qualitatively (NetKV always wins, the gap widens with cluster size) but disagree on the absolute magnitude. Per-condition, the flow-level estimator overestimates the mean TTFT itself by 9--13\% and the NetKV-vs-CLA* gap by 4--6 percentage points across matched configurations; the discrepancy traces to per-flow contention dynamics (DCQCN convergence transients and ECMP hash collisions) that the packet-level model captures and the flow-level model abstracts away. We therefore treat the flow-level numbers at 256, 512, and 1024 GPUs as upper bounds on the true advantage and read the realistic improvement at very large scale as the packet-level 7.4\% number plus a topology-driven trend that we cannot bound without packet-level runs at those scales. Second, the transfer time exhibits a divergent trend across schedulers. CLA*'s transfer time rises from 559\,ms (64 GPUs, packet) to 851\,ms (128 GPUs and above, flow) as larger clusters route a higher fraction of traffic across pod boundaries, while NetKV-Full keeps its transfer time essentially flat at 451\,ms (packet) or 603\,ms (flow). The topology awareness avoids the cross-pod penalty that grows naturally with scale. The flow-level gap is flat at 13.6\% from 128 GPUs upward because the simulator's per-tier abstraction saturates: every prefill--decode pair that would route cross-pod under CLA* already does once the cluster is large enough, so CLA*'s cost stops growing. Real fabrics may exhibit per-link contention effects at very large scale that neither model captures. Third, the scheduler decision latency grows sub-linearly with cluster size, because the candidate pool after memory filtering grows slowly, and it remains below 1.5\,ms even at 1024 GPUs, which is negligible relative to the 600\,ms transfer time it optimises.

\subsection{Cross-Experiment Synthesis}
\label{sec:cross_summary}

Four patterns emerge. (i)~The TTFT gain is driven by transfer-time reduction (19--38\% across conditions). (ii)~The TBT cost is bounded: absolute TBT stays in 12.55--13.42\,ms across all runs, with NetKV-vs-CLA* overhead of 0--0.45\,ms, an order of magnitude below interactive TBT SLOs. (iii)~The gains are orthogonal to cache-aware and load-balanced scheduling (Experiment~5). (iv)~Static topology awareness dominates; dynamic congestion adds a small residual under stress, so a minimal deployment exposing only the static tier map captures most of the benefit.

\section{Discussion}
\label{sec:discussion}

%

\subsection{Deployment Considerations}

NetKV is a scoring plugin rather than a scheduler replacement. In llm-d, the natural integration target is the inference scheduler's pluggable scorer chain~\cite{llmd}: existing scorers expose a \texttt{Score(ctx, pods)$\to$map[Pod]float64} interface in the Endpoint Picker, and NetKV implements one more such scorer alongside the prefix-cache, load, and session-affinity scorers. In Dynamo, the equivalent integration point is the KV-aware router's scoring function. No changes are required to the transport layer (NIXL, NCCL, RDMA), the inference engine (vLLM, SGLang), or the network hardware. The self-contention counter is decremented via the existing transfer-complete event that vLLM's \texttt{KVConnector} and Dynamo's KV router already fire to release the prefill-side KV buffers. The static oracle components derive from existing Kubernetes topology labels and hardware specifications; only the per-tier congestion signal is new, and it is a lightweight aggregation of standard switch telemetry. The Experiment~4 staleness results show that a once-per-minute refresh suffices. When tenants make network-aware decisions, traffic becomes locality-biased and the operator benefits from reduced cross-pod bandwidth demand, as the tier-distribution analysis in the next subsection makes explicit.


\subsection{Tier-Shifting Mechanism}

The mechanism underlying NetKV's TTFT improvement is a redistribution of KV transfers toward lower-latency tiers, which we call \emph{tier shifting} (Table~\ref{tab:tier_dist}, RAG at 100\% load). CLA* distributes transfers between Tier~2 and Tier~3 in roughly 32:68 proportion since it treats the two tiers as equivalent on its compute and cache criteria. NetKV-Full reverses this to 69:31 by scoring same-pod candidates higher than cross-pod ones under matched compute and cache state. The cross-pod fabric is the slowest path in our topology (4:1 oversubscription), so each cross-pod transfer carries a multi-second penalty that NetKV avoids whenever a same-pod candidate exists. This shift cuts the mean transfer time from 835 to 620\,ms, a 25.7\% reduction that translates directly into the 11.8\% TTFT improvement in Experiment~1.

\begin{table}[t]
\centering
\caption{Fraction of transfers traversing each tier under CLA* and NetKV-Full. RAG workload at 100\% load. Tier~0 and Tier~1 are unreached in this configuration because the instance placement does not co-locate prefill with decode at server or rack granularity.}
\label{tab:tier_dist}
\small
\begin{tabular}{@{}lrr@{}}
\toprule
Tier & CLA* & NetKV-Full \\
\midrule
Tier 0 (same-node)  & 0.0\%   & 0.0\%   \\
Tier 1 (same-rack)  & 0.0\%   & 0.0\%   \\
Tier 2 (same-pod)   & 32.0\%  & 68.9\%  \\
Tier 3 (cross-pod)  & 68.0\%  & 31.1\%  \\
\midrule
Mean transfer time  & 835\,ms & 620\,ms \\
\bottomrule
\end{tabular}
\end{table}

This tier-shifting mechanism explains why NetKV's benefit scales with the oversubscription ratio in Experiment~3. Higher oversubscription widens the per-tier bandwidth gap, which makes the penalty for cross-pod placement more severe and the reward for avoiding it correspondingly larger. The mechanism also explains why the gain is largely orthogonal to prefix sharing (Experiment~5): the cache hit reduces $s_r^{\text{eff}}$ on either tier, but it does not change the relative bandwidth available on that tier, so the tier choice continues to dominate the transfer-time budget.

\subsection{Limitations}

Our evaluation has some limitations that we report in the following. First, all numbers come from a flow-level simulator. A real-cluster validation is the most important next step, since the simulator faithfully models ECMP, fair sharing, and multi-tier bandwidth contention but does not capture NIC-level effects (driver queueing, completion-queue contention) or kernel-bypass overheads; we plan a real cluster testbed measurement to calibrate the simulator's $B_\text{eff}$ against measured DCQCN behaviour. Second, the $\bar{t}_\text{iter}(\beta)$ model is fitted from published data rather than measured on our target configuration, and although we bias the fit conservatively, a single profile on the deployment hardware would replace the triangulation. Third, the serial transfer-queue-decode model overestimates the visible $T_\text{transfer}$ in production systems that pipeline transfer with computation; our comparative advantage persists since all baselines use the same model, but the absolute gap would shrink. Fourth, the oracle collapses per-link congestion to per-tier aggregates, which is exact for Tier~0 and Tier~1 but approximate for Tier~2 and Tier~3 under ECMP. The cost model therefore cannot see ECMP hash collisions that the simulator does generate, which may understate the P99 advantage of NetKV in heavy-tailed contention regimes; per-link telemetry would close this gap. Fifth, the CLA* baseline matches the \emph{scoring} component of Mooncake's Conductor and llm-d's composite scorer, but Mooncake additionally performs KV-cache replication across tiers that we do not model; a full comparison against the deployed Conductor is left for future work. Sixth, the per-request greedy does not jointly optimise across concurrent requests; a batch-level formulation could yield better results at higher computational cost. Seventh, we assume an FP16 KV cache. INT8 or block-quantised KV~\cite{cachegen} roughly halves $s_r$ and proportionally reduces the network term's weight, so NetKV remains beneficial but the gains shrink. The TP=4 evaluation choice also leaves open how the gains transfer to the more common TP=8 production configuration, where the candidate pool is sparser; a single TP=8 data point is part of our planned validation work.

\subsection{Future Work}

Four directions follow naturally. The most direct is a real-cluster validation on a testbed with GPUs and RDMA-capable NICs, comparing per-transfer completion times under controlled background load against the simulator's predictions. Multi-hop KV routing extends NetKV to architectures that stage KV state through intermediate caches in CPU DRAM or SSDs~\cite{mooncake,lmcache}: the oracle exposes tier information for both hops and the cost model sums the two transfer times, with the greedy generalising naturally. Predictive congestion replaces the instantaneous snapshot with a lightweight time-series prediction (exponential smoothing or ARIMA) over historical telemetry, leveraging the large staleness tolerance of Proposition~\ref{prop:staleness}. Mixture-of-experts inference makes the TBT itself network-sensitive (expert-parallel All-to-All every step), so NetKV would extend to a joint TTFT/TBT optimisation. Finally, multi-tenant operation on shared infrastructure raises a price-of-anarchy question that we leave open; a basic deployment can surface tenant-specific oracle views (the same tier map but with congestion telemetry restricted to the tenant's allocated share) so each tenant sees a self-consistent picture, with global fairness delegated to the operator's QoS layer.

\section{Conclusion}
\label{sec:conclusion}

Disaggregated LLM inference makes the KV-cache transfer the dominant TTFT determinant, yet current schedulers ignore network state. NetKV closes the gap with a lightweight cost oracle between scheduler and operator, with no changes to the transport, the inference engine, or hardware. On a 64-GPU fat-tree, NetKV cuts mean TTFT by up to 21.2\% over RR and 17.6\% over CLA*, lifts SLO attainment by up to 20.1 percentage points, and keeps TBT overhead below 0.5\,ms throughout. The ablation isolates the static tier signal as the dominant contributor: a minimal deployment that exposes only the tier map captures most of the gain.



\bibliographystyle{IEEEtran}
\bibliography{refs}

@inproceedings{distserve,
  author = {Zhong, Yinmin and Liu, Shengyu and Chen, Junda and Hu, Jianbo and Zhu, Yibo and Liu, Xuanzhe and Jin, Xin and Zhang, Hao}, title = {DistServe: disaggregating prefill and decoding for goodput-optimized large language model serving}, year = {2024}, isbn = {978-1-939133-40-3}, publisher = {USENIX Association}, address = {USA}, booktitle = {Proceedings of the 18th USENIX Conference on Operating Systems Design and Implementation}, articleno = {11}, numpages = {18}, location = {Santa Clara, CA, USA}, series = {OSDI'24}
}

@article{splitwise,
  author    = {Pratyush Patel and Esha Choukse and Chaojie Zhang and Aashaka Shah and {\'I}{\~n}igo Goiri and Saeed Maleki and Ricardo Bianchini},
  title     = {{Splitwise}: Efficient Generative {LLM} Inference Using Phase Splitting},
  booktitle = {Proceedings of the 51st Annual International Symposium on Computer Architecture (ISCA)},
  year      = {2024},
}

@inproceedings{mooncake,
  author = {Qin, Ruoyu and Li, Zheming and He, Weiran and Cui, Jialei and Tang, Heyi and Ren, Feng and Ma, Teng and Cai, Shangming and Zhang, Yineng and Zhang, Mingxing and Wu, Yongwei and Zheng, Weimin and Xu, Xinran}, title = {Mooncake: A KVCache-centric Disaggregated Architecture for LLM Serving}, year = {2025}, publisher = {Association for Computing Machinery}, address = {New York, NY, USA}, issn = {1553-3077}, url = {https://doi.org/10.1145/3773772}, doi = {10.1145/3773772}, journal = {ACM Trans. Storage}, month = nov
}

@inproceedings{vllm,
  author = {Kwon, Woosuk and Li, Zhuohan and Zhuang, Siyuan and Sheng, Ying and Zheng, Lianmin and Yu, Cody Hao and Gonzalez, Joseph and Zhang, Hao and Stoica, Ion}, title = {Efficient Memory Management for Large Language Model Serving with PagedAttention}, year = {2023}, isbn = {9798400702297}, publisher = {Association for Computing Machinery}, address = {New York, NY, USA}, url = {https://doi.org/10.1145/3600006.3613165}, doi = {10.1145/3600006.3613165}, booktitle = {Proceedings of the 29th Symposium on Operating Systems Principles}, pages = {611–626}, numpages = {16}, location = {Koblenz, Germany}, series = {SOSP '23}
}

@misc{dynamo,
  author       = {{NVIDIA}},
  title        = {{Dynamo}: A Datacenter-Scale Distributed Inference Framework},
  howpublished = {Open-source project, \url{https://github.com/ai-dynamo/dynamo}},
  year         = {2025},
}

@misc{llmd,
  author       = {{llm-d project}},
  title        = {{llm-d}: Kubernetes-native Distributed Inferencing},
  howpublished = {CNCF Sandbox, \url{https://llm-d.ai}},
  year         = {2025},
}

@inproceedings{cassini,
  author = {Rajasekaran, Sudarsanan and Ghobadi, Manya and Akella, Aditya}, title = {CASSINI: network-aware job scheduling in machine learning clusters}, year = {2024}, isbn = {978-1-939133-39-7}, publisher = {USENIX Association}, address = {USA}, booktitle = {Proceedings of the 21st USENIX Symposium on Networked Systems Design and Implementation}, articleno = {78}, numpages = {18}, location = {Santa Clara, CA, USA}, series = {NSDI'24}
}

@inproceedings{topoopt,
  title={$\{$TopoOpt$\}$: Co-optimizing network topology and parallelization strategy for distributed training jobs},
  author={Wang, Weiyang and Khazraee, Moein and Zhong, Zhizhen and Ghobadi, Manya and Jia, Zhihao and Mudigere, Dheevatsa and Zhang, Ying and Kewitsch, Anthony},
  booktitle={20th USENIX Symposium on Networked Systems Design and Implementation (NSDI 23)},
  pages={739--767},
  year={2023}
}

@misc{gorgo,
  title        = {{GORGO}: Maximizing {KV}-Cache Reuse While Minimizing Network Latency in Cross-Region {LLM} Load Balancing},
  howpublished = {arXiv:2602.11688},
  year         = {2026},
  month        = feb,
}

@inproceedings{helix,
  author = {Mei, Yixuan and Zhuang, Yonghao and Miao, Xupeng and Yang, Juncheng and Jia, Zhihao and Vinayak, Rashmi}, title = {Helix: Serving Large Language Models over Heterogeneous GPUs and Network via Max-Flow}, year = {2025}, isbn = {9798400706981}, publisher = {Association for Computing Machinery}, address = {New York, NY, USA}, url = {https://doi.org/10.1145/3669940.3707215}, doi = {10.1145/3669940.3707215}, booktitle = {Proceedings of the 30th ACM International Conference on Architectural Support for Programming Languages and Operating Systems, Volume 1}, pages = {586–602}, numpages = {17}, location = {Rotterdam, Netherlands}, series = {ASPLOS '25}
}

@misc{flowkv,
  author       = {Weiqing Li and Guochao Jiang and Xiangyong Ding and Zhangcheng Tao and Chuzhan Hao and Chenfeng Xu and Yuewei Zhang and Hao Wang},
  title        = {{FlowKV}: A Disaggregated Inference Framework with Low-Latency {KV} Cache Transfer and Load-Aware Scheduling},
  howpublished = {arXiv:2504.03775},
  year         = {2025},
}

@inproceedings{sarathi,
  author    = {Amey Agrawal and Nitin Kedia and Ashish Panwar and Jayashree Mohan and Nipun Kwatra and Bhargav S. Gulavani and Alexey Tumanov and Ramachandran Ramjee},
  title     = {{Sarathi-Serve}: Taming Throughput--Latency Tradeoff in {LLM} Inference with Chunked Prefill},
  booktitle = {Proceedings of the 18th USENIX Symposium on Operating Systems Design and Implementation (OSDI)},
  year      = {2024},
}

@inproceedings{alpaserve,
  author    = {Zhuohan Li and Lianmin Zheng and Yinmin Zhong and Vincent Liu and Ying Sheng and Xin Jin and Yanping Huang and Zhifeng Chen and Hao Zhang and Joseph E. Gonzalez and Ion Stoica},
  title     = {{AlpaServe}: Statistical Multiplexing with Model Parallelism for Deep Learning Serving},
  booktitle = {Proceedings of the 17th USENIX Symposium on Operating Systems Design and Implementation (OSDI)},
  year      = {2023},
}

@inproceedings{serverlessllm,
  author    = {Yao Fu and Leyang Xue and Yeqi Huang and Andrei-Octavian Brabete and Dmitrii Ustiugov and Yuvraj Patel and Luo Mai},
  title     = {{ServerlessLLM}: Locality-Enhanced Serverless Inference for Large Language Models},
  booktitle = {Proceedings of the 18th USENIX Symposium on Operating Systems Design and Implementation (OSDI)},
  year      = {2024},
}

@inproceedings{llumnix,
  author    = {Biao Sun and Ziming Huang and Hanyu Zhao and Wencong Xiao and Xinyi Zhang and Yong Li and Wei Lin},
  title     = {{Llumnix}: Dynamic Scheduling for Large Language Model Serving},
  booktitle = {Proceedings of the 18th USENIX Symposium on Operating Systems Design and Implementation (OSDI)},
  year      = {2024},
}

@inproceedings{cachegen,
  author    = {Yuhan Liu and Hanchen Li and Yihua Cheng and Siddhant Ray and Yuyang Huang and Qizheng Zhang and Kuntai Du and Jiayi Yao and Shan Lu and Ganesh Ananthanarayanan and Michael Maire and Henry Hoffmann and Ari Holtzman and Junchen Jiang},
  title     = {{CacheGen}: {KV} Cache Compression and Streaming for Fast Large Language Model Serving},
  booktitle = {Proceedings of the ACM SIGCOMM 2024 Conference},
  year      = {2024},
}

@article{vclos,
  author = {Han, Xinchi and Zhao, Shizhen and Lv, Yongxi and Cao, Peirui and Jiang, Weihao and Yang, Qinwei and Liu, Yunzhuo and Lin, Shengkai and Jiang, Bo and Liu, Ximeng and Cui, Yong and Zhou, Chenghu and Wang, Xinbing}, title = {vClos: Network contention aware scheduling for distributed machine learning tasks in multi-tenant GPU clusters}, year = {2025}, issue_date = {Aug 2025}, publisher = {Elsevier North-Holland, Inc.}, address = {USA}, volume = {268}, number = {C}, issn = {1389-1286}, url = {https://doi.org/10.1016/j.comnet.2025.111285}, doi = {10.1016/j.comnet.2025.111285}, journal = {Comput. Netw.}, month = aug, numpages = {13}
}

@inproceedings{hpcc,
  author    = {Yuliang Li and Rui Miao and Hongqiang Harry Liu and Yan Zhuang and Fei Feng and Lingbo Tang and Zheng Cao and Ming Zhang and Frank Kelly and Mohammad Alizadeh and Minlan Yu},
  title     = {{HPCC}: High Precision Congestion Control},
  booktitle = {Proceedings of the ACM SIGCOMM 2019 Conference},
  year      = {2019},
}

@misc{mlperf,
  author       = {{MLCommons}},
  title        = {{MLPerf} Inference: Datacenter v5.0 Results},
  howpublished = {\url{https://mlcommons.org/benchmarks/inference-datacenter/}},
  year         = {2025},
  month        = apr,
  note         = {Llama-2-70B offline, e.g., Juniper Networks 32$\times$H100 submission at 82,749 tokens/s.},
}

@misc{lmcache,
  author       = {Yuhan Liu and Jiayi Yao and Hanchen Li and Yihua Cheng and others},
  title        = {{LMCache}: An Efficient {KV} Cache Layer for Enterprise-Scale {LLM} Inference},
  howpublished = {arXiv:2510.09665},
  year         = {2025},
  month        = oct,
}

@inproceedings{orca,
  title={Orca: A distributed serving system for $\{$Transformer-Based$\}$ generative models},
  author={Yu, Gyeong-In and Jeong, Joo Seong and Kim, Geon-Woo and Kim, Soojeong and Chun, Byung-Gon},
  booktitle={16th USENIX symposium on operating systems design and implementation (OSDI 22)},
  pages={521--538},
  year={2022}
}

@inproceedings{sglang,
  author = {Zheng, Lianmin and Yin, Liangsheng and Xie, Zhiqiang and Sun, Chuyue and Huang, Jeff and Yu, Cody Hao and Cao, Shiyi and Kozyrakis, Christos and Stoica, Ion and Gonzalez, Joseph E. and Barrett, Clark and Sheng, Ying}, title = {SGLang: efficient execution of structured language model programs}, year = {2024}, isbn = {9798331314385}, publisher = {Curran Associates Inc.}, address = {Red Hook, NY, USA},  booktitle = {Proceedings of the 38th International Conference on Neural Information Processing Systems}, articleno = {2000}, numpages = {27}, location = {Vancouver, Canada}
}

@inproceedings{fattree,
  author = {Al-Fares, Mohammad and Loukissas, Alexander and Vahdat, Amin}, title = {A scalable, commodity data center network architecture}, year = {2008}, issue_date = {October 2008}, publisher = {Association for Computing Machinery}, address = {New York, NY, USA}, volume = {38}, number = {4}, issn = {0146-4833}, url = {https://doi.org/10.1145/1402946.1402967}, doi = {10.1145/1402946.1402967}, journal = {SIGCOMM Comput. Commun. Rev.}, month = aug, pages = {63–74}, numpages = {12} }

@inproceedings{gqa,
  title={{GQA}: Training generalized multi-query transformer models from multi-head checkpoints},
  author={Ainslie, Joshua and Lee-Thorp, James and De Jong, Michiel and Zemlyanskiy, Yury and Lebr{\'o}n, Federico and Sanghai, Sumit},
  booktitle={Proceedings of the 2023 Conference on Empirical Methods in Natural Language Processing},
  pages={4895--4901},
  year={2023}
}

@misc{llama3,
  title={The llama 3 herd of models},
  author={Grattafiori, Aaron and Dubey, Abhimanyu and Jauhri, Abhinav and Pandey, Abhinav and Kadian, Abhishek and Al-Dahle, Ahmad and Letman, Aiesha and Mathur, Akhil and Schelten, Alan and Vaughan, Alex and others},
  journal={arXiv preprint arXiv:2407.21783},
  year={2024}
}

@inproceedings{dcqcn,
  author = {Zhu, Yibo and Eran, Haggai and Firestone, Daniel and Guo, Chuanxiong and Lipshteyn, Marina and Liron, Yehonatan and Padhye, Jitendra and Raindel, Shachar and Yahia, Mohamad Haj and Zhang, Ming}, title = {Congestion Control for Large-Scale RDMA Deployments}, year = {2015}, issue_date = {October 2015}, publisher = {Association for Computing Machinery}, address = {New York, NY, USA}, volume = {45}, number = {4}, issn = {0146-4833}, url = {https://doi.org/10.1145/2829988.2787484}, doi = {10.1145/2829988.2787484}, journal = {SIGCOMM Comput. Commun. Rev.}, month = aug, pages = {523–536}, numpages = {14} }

@inproceedings{dctcp,
 author = {Alizadeh, Mohammad and Greenberg, Albert and Maltz, David A. and Padhye, Jitendra and Patel, Parveen and Prabhakar, Balaji and Sengupta, Sudipta and Sridharan, Murari}, title = {Data center TCP (DCTCP)}, year = {2010}, isbn = {9781450302012}, publisher = {Association for Computing Machinery}, address = {New York, NY, USA}, url = {https://doi.org/10.1145/1851182.1851192}, doi = {10.1145/1851182.1851192}, booktitle = {Proceedings of the ACM SIGCOMM 2010 Conference}, pages = {63–74}, numpages = {12}, location = {New Delhi, India}, series = {SIGCOMM '10}
}

@inproceedings{Canini2025Cloud,
author = {Canini, Marco and Benson, Theophilus A. and Bianchini, Ricardo and Goiri, \'{I}\~{n}igo and Kosti\'{c}, Dejan and Pietzuch, Peter and Peter, Simon},
title = {Cloud abstractions for AI workloads},
year = {2025},
isbn = {9798400715723},
publisher = {Association for Computing Machinery},
address = {New York, NY, USA},
url = {https://doi.org/10.1145/3725783.3764395},
doi = {10.1145/3725783.3764395},
booktitle = {Proceedings of the 16th ACM SIGOPS Asia-Pacific Workshop on Systems},
pages = {98–105},
numpages = {8},
location = {Lotte Hotel World, Emerald Hall, Seoul, Republic of Korea},
series = {APSys '25}
}

\balance

\end{document}